\documentclass[%
superscriptaddress,
 amsmath,amssymb,
 aps,
 pra,
floatfix,
twocolumn
]{revtex4-1}
\usepackage[utf8]{inputenc}
\usepackage[T1]{fontenc}
\usepackage{babel}
\usepackage{graphicx}
\usepackage{amsfonts}
\usepackage{amssymb}
\usepackage{amsmath}
\usepackage{amsthm}
\usepackage{mathtools}
\usepackage{physics}
\usepackage[normalem]{ulem}
\usepackage{blindtext}

\DeclareMathOperator{\arcosh}{arcosh}

\usepackage{braket}
\renewcommand\bra[1]{{\langle{#1}|}}
\renewcommand\ket[1]{{|{#1}\rangle}}
\usepackage{multirow}
\usepackage{tabularx}
\usepackage{dsfont}
\usepackage{bm}
\usepackage{xcolor}

\newtheorem{theorem}{Theorem}
\newtheorem{proposition}[theorem]{Proposition}

\theoremstyle{definition}

\usepackage{natbib}
\setcitestyle{square,numbers}
\usepackage[colorlinks]{hyperref}
\usepackage[format=plain,justification=centerlast]{caption}
\usepackage[format=plain,justification=centerlast]{subcaption}

\begin{document}

\author{Tomasz Linowski}
\affiliation{International Centre for Theory of Quantum Technologies, University of Gdansk, 80-308 Gda{\'n}sk, Poland}
\email[Corresponding author: ]{t.linowski95@gmail.com}

\author{{\L}ukasz Rudnicki}
\affiliation{International Centre for Theory of Quantum Technologies, University of Gdansk, 80-308 Gda{\'n}sk, Poland}
\affiliation{Center for Theoretical Physics, Polish Academy of Sciences, 02-668 Warszawa, Poland}

\title{Relating the Glauber-Sudarshan, Wigner and Husimi quasiprobability distributions operationally through the quantum limited amplifier and attenuator channels}

\date{\today}

\begin{abstract}
The Glauber-Sudarshan, Wigner and Husimi quasiprobability distributions are indispensable tools in quantum optics. However, although mathematical relations between them are well established, not much is known about their operational connection. In this paper, we prove that a single composition of finite-strength quantum limited amplifier and attenuator channels, known for their noise-adding properties, turns the Glauber-Sudarshan distribution of any input operator into its Wigner distribution, and its Wigner distribution into its Husimi distribution. As we dissect, the considered process, which can be performed in a quantum optical laboratory with relative ease, may be interpreted as realizing a quantum-to-classical transition.
\end{abstract}

\maketitle

\section{Introduction}
In 1932 \cite{Wigner_distribution_Wigner_1932}, Wigner discovered what is now called the Wigner function, a description of quantum mechanics akin to the classical phase space, though not fully compatible with it. A few years later, in 1940 \cite{Q_representation}, Husimi invented the Husimi representation, a function even closer to classical mechanics due to its non-negativity. Finally, in 1963 \cite{P_representation_Glauber,P_representation_Sudarshan}, Glauber and Sudarshan discovered that any quantum state of the radiation field can be written as a diagonal sum over the set of coherent states weighed according to the Glauber-Sudarshan distribution.

Today, these three quasiprobability distributions, the ``quasi'' part coming from the fact that none of them fulfill all the properties of an actual probability distribution, are some of the most important theoretical tools in quantum optics, especially in quantum tomography and investigations of non-classicality \cite{quantum_optics_summary_Olivares_2021,quasiprobability_distributions_tomography_Koczor_2020,quasiprobability_distributions_Sperling_2020,Schleich_Quantum_Optics,optics_Agarwal_2012,quasiprobability_distributions_Milburn_1989}. In the case of the former, the Wigner function has allowed for probing classes of systems as varied as squeezed vacuum \cite{Wigner_tomography_squeezed_Smithey_1993}, thermal \cite{Wigner_tomography_thermal_2021} and single photon states, among others \cite{Wigner_tomography_single_photon_Lvovsky_2001,Wigner_tomography_single_photon_Zavatta_2007,Wigner_tomography_single_photon_Laiho_2010}. In the case of the latter, both Wigner and Glauber-Sudarshan distributions are used \cite{quasiprobability_distributions_non-classicality_Tan_2020,beam_splitter_classicality_Brunelli_2015,Wigner_non-classicality_Kenfack_2004}, with the negativity of the Glauber-Sudarshan distribution typically accepted as the very definition of non-classical light \cite{Glauber_distribution_non-classicality_Titulaer_1965,Glauber_distribution_non-classicality_Mandel_1986}. The Husimi function also has many applications, ranging from the aforementioned tomography \cite{Husimi_tomography_Kanem_2005,Husimi_tomography_Agarwal_1998} to the quantum phase measurement \cite{Paul_phase_experiment_Freyberger_1993,FREYBERGER199341}.

The three phase-space distributions are not independent: the mathematical relations between them are well known \cite{optics_Agarwal_2012,quasiprobability_distributions_Sperling_2020}. More explicitly, the Wigner and Husimi distributions can be understood in terms of simple Gaussian smoothing of the Glauber-Sudarshan distribution. However, to the best of our knowledge, the physical connection between these distributions is not established, aside from the fact that the Glauber-Sudarshan distribution of a system subjected to the quantum limited amplifier channel of infinite strength approaches its Husimi distribution \cite{quantum_phase_Q_amplification_Schleich_1992}. This process, however, requires infinite energy, and is therefore unphysical.

In this paper, we propose a simple physical process that turns the Glauber-Sudarshan distribution of any operator (including any density operator) into its Wigner function, and its Wigner function into its Husimi function. This process, given by a finite-strength quantum limited amplifier followed by an attenuator channel \cite{pure-loss_De_Palma_2017,pure-loss_Holevo_2007,pure-loss_Garcia_2012,pure-loss_Mari_2014}, has a well-known operational interpretation in terms of noise addition and, being Gaussian, is readily accessible experimentally \cite{Gaussianity_resource_Albarelli_2018,Gaussianity_resource_Takagi_2018}. Furthermore, as we show, it has a number of properties expected from a ``classicalization'' procedure: for example, its double application realizes a projection onto the set of coherent states.

This paper is organized as follows. In Section \ref{sec:quasiprobabilities}, we formally introduce the discussed quasiprobability distributions. In Section \ref{sec:qla_channels} we do the same for our main tool: quantum limited amplifier and attenuator channels. In Sections \ref{sec:results} and \ref{sec:discussion}, we first derive and then thoroughly discuss our main results. Finally, we provide outlooks in Section \ref{sec:outlooks}.

\section{Glauber-Sudarshan, Wigner and Husimi quasiprobability distributions} 
\label{sec:quasiprobabilities}
We start by providing basic information about the three considered quasiprobability distributions, relevant for the derivation and understanding of our results. For a more thorough treatment, we refer the reader to one of the numerous didactic sources, e.g. \cite{quantum_optics_summary_Olivares_2021,Schleich_Quantum_Optics,optics_Agarwal_2012}. We remark that, for clarity of presentation, we assume the Hilbert space to be one-mode. However, we stress that our results apply (through a straightforward generalization) to an arbitrary number of modes.

The \emph{Glauber-Sudarshan P distribution} \cite{P_representation_Glauber,P_representation_Sudarshan} of an arbitrary operator $\hat{X}$ (including any density operator) is defined through a diagonalization in the set of coherent states $\ket{\alpha}$, i.e. the eigenstates of the annihilation operator $\hat{a}$ with eigenvalue $\alpha$. More precisely, the P distribution is such that
\begin{align} \label{eq:P}
\begin{split}
    \hat{X}
        & = \int \frac{d^2\alpha}{\pi} P_{\hat{X}}(\alpha) \ket{\alpha}\bra{\alpha},           
\end{split}
\end{align}
where
\begin{align} \label{eq:d2alpha}
\begin{split}
    \int d^2\alpha \coloneqq \int_{-\infty}^{\infty}d\Re(\alpha)\int_{-\infty}^{\infty}d\Im(\alpha).
\end{split}
\end{align}
In general, the P distribution is not a function, and in fact it can be highly singular. This is most easily seen from its formal expression in the Fock basis \cite{optics_Agarwal_2012}:
\begin{align} \label{eq:P_formal}
\begin{split}
    P_{\hat{X}}(\alpha) = \sum_{n,m=0}^\infty X_{nm} \frac{(-1)^{n+m}}{\sqrt{n!m!}} 
        e^{|\alpha|^2}\frac{\partial^{n+m}}{\partial\alpha^n\partial\alpha^{*m}}\delta(\alpha).
\end{split}
\end{align}
where $\delta(\alpha)$ is the Dirac delta distribution. For example, for a pure Fock state, we have
\begin{align} \label{eq:Fock_P}
\begin{split}
    P_{\ket{n}\bra{n}}(\alpha) = \frac{1}{n!} 
        e^{|\alpha|^2}\frac{\partial^{2n}}{\partial\alpha^n\partial\alpha^{*n}}\delta(\alpha).
\end{split}
\end{align}
Still, for states considered semi-classical, such as, e.g. thermal states, the P distribution does reduce to an ordinary, non-negative function, meaning that the state can be expressed as a classical mixture of coherent states. For this reason, the non-positivity of the P distribution is used as a criterion for non-classicality \cite{Glauber_distribution_non-classicality_Titulaer_1965,Glauber_distribution_non-classicality_Mandel_1986}. Note that, like all quasiprobability distributions, the P distribution is normalized to one.

The \emph{Wigner W distribution} \cite{Wigner_distribution_Wigner_1932} (sometimes called a function) is closer to an ordinary probability distribution: although it may obtain negative values for some states, it is never singular for them. For our purposes, it is most conveniently defined as
\begin{align} \label{eq:W}
\begin{split}
    W_{\hat{X}}(\alpha) \coloneqq \int \frac{d^2\beta}{\pi} 
        \Tr\left[\hat{X}\hat{D}(\beta)\right]e^{\alpha\beta^*-\alpha^*\beta},
\end{split}
\end{align}
where $\hat{D}(\beta)$ is the displacement operator:
\begin{align} \label{eq:displacement_operator}
\begin{split}
    \hat{D}(\beta) \coloneqq \exp\left(\beta\hat{a}^\dag-\beta^*\hat{a}\right).
\end{split}
\end{align}
The W distribution of a Fock state equals \cite{optics_Agarwal_2012}
\begin{align} \label{eq:Fock_W}
\begin{split}
    W_{\ket{n}\bra{n}}(\alpha) = 2(-1)^n e^{-2|\alpha|^2}L_n\left(4|\alpha|^2\right),
\end{split}
\end{align}
with $L_n$ being the $n$-th Laguerre polynomial. As seen, although it does take on negative values, it is still a well-behaved function. Similarly to the P distribution, non-positivity of the W function can be considered a measure of non-classicality \cite{quasiprobability_distributions_non-classicality_Tan_2020}.

Finally, the \emph{Husimi Q distribution} \cite{Q_representation} (or function) is the most classical-like: it is non-negative for all quantum states, which is obvious from its definition:
\begin{align} \label{eq:Q}
\begin{split}
    Q_{\hat{X}}(\alpha) \coloneqq \bra{\alpha}\hat{X}\ket{\alpha}.
\end{split}
\end{align}
For the particular case of a Fock state, we obtain the Poisson distribution in the particle number with mean~$|\alpha|^2$:
\begin{align} \label{eq:Fock_Q}
\begin{split}
    Q_{\ket{n}\bra{n}}(\alpha) = \frac{|\alpha|^{2n}}{n!} e^{-|\alpha|^2}.
\end{split}
\end{align}
We stress that despite its non-negativity, even the Q function is not a true probability distribution as, due to the non-orthogonality of coherent states, different values of $\alpha$ do not correspond to mutually exclusive events.

Let us remark that historically, the discussed distributions were often defined with an additional multiplicative factor of $1/\pi$. Here, we follow the more modern convention of, e.g. \cite{Wehrl_entropy_De_Palma_2017,husimi_normalization_Floerchinger_2021} and omit this factor (of course in the end both conventions give exactly the same results). In addition to resulting in more consistent formulas (the factor of $1/\pi$ now simply \emph{always} appears with the integration measure $d^2\alpha$), this arguably brings the quasiprobability distributions closer to classical distributions. Especially the Q distribution is now bounded from above by $1$, rather than $1/\pi$, meaning that, e.g. in the case of Eq. (\ref{eq:Fock_Q}) it becomes precisely the Poisson distribution, instead of only being proportional to it.

It is well known that the three discussed quasiprobability distributions are related by means of Gaussian smoothing, a so-called \emph{Weierstrass transform} \cite{optics_Agarwal_2012,quasiprobability_distributions_Sperling_2020}:
\begin{align} 
    W_{\hat{X}}(\alpha) &= \label{eq:P_into_W_transform}
        2 \int \frac{d^2\beta}{\pi} P_{\hat{X}}(\beta) e^{-2|\alpha-\beta|^2},\\
    Q_{\hat{X}}(\alpha) &= \label{eq:W_into_Q_transform}
        2 \int \frac{d^2\beta}{\pi} W_{\hat{X}}(\beta) e^{-2|\alpha-\beta|^2},
\end{align}
which, applied in succession, imply also
\begin{align} 
    Q_{\hat{X}}(\alpha) &= \label{eq:P_into_Q_transform}
        \int \frac{d^2\beta}{\pi} P_{\hat{X}}(\beta) e^{-|\alpha-\beta|^2}.
\end{align}
Eqs. (\ref{eq:P_into_W_transform}-\ref{eq:P_into_Q_transform}) provide a basic intuition about why the Q distribution behaves more classical-like than the W distribution, which in turn is more classical than the P distribution: the potential irregularities in the input distribution are being ``smeared'' in the output by a Gaussian function. The main aim of our paper is to give these mathematical formulas an explicit operational interpretation in terms of well known, experimentally available transformations: the quantum limited amplifier and the quantum limited attenuator.

\section{Quantum limited amplifier and attenuator channels}
\label{sec:qla_channels}
For a single mode, the action of the \emph{quantum limited amplifier channel} of strength $\kappa\geqslant 1$ on an operator $\hat{X}$ is defined through an interaction with an ancillary system in the vacuum state as~\cite{Quantum_limited_amplifier_Mollow_1967,QLA_De_Palma_2017}
\begin{align} \label{eq:QLAmp}
    \mathcal{A}_\kappa\big(\hat{X}\big)\coloneqq\Tr_2\left[
            \hat{S}_{12}(\kappa)
            \left(\hat{X}\otimes\ket{0}\bra{0}\right)
            \hat{S}_{12}^\dag(\kappa)\right],
\end{align}
where
\begin{align} \label{eq:squeezing_operator}
    \hat{S}_{12}(\kappa)\coloneqq
        \exp\left[\arcosh\sqrt{\kappa}(\hat{a}^\dag\hat{b}^\dag-\hat{a}\hat{b})\right]
\end{align}
is the two-mode squeezing operator and $\hat{b}$ is the annihilation operator of the ancillary system traced out in Eq. (\ref{eq:QLAmp}). The larger the value of $\kappa$, the stronger the amplification, with $\kappa=1$ corresponding to the identity channel. Physically, the amplifier describes the process of pumping particles into the system:
\begin{align} \label{eq:n_A}
    \braket{\hat{a}^\dag\hat{a}}_{\mathcal{A}_\kappa(\hat{\rho})} 
    = \kappa\braket{\hat{a}^\dag\hat{a}}_{\hat{\rho}} + \kappa-1,
\end{align}
which is easy to prove from the definition of the channel and the fact that \cite{QLA_De_Palma_2017}
\begin{align} \label{eq:squeezing_transformation_original}
    \hat{S}_{12}^\dag(\kappa) \, \hat{a} \, \hat{S}_{12}(\kappa) 
        &= \sqrt{\kappa}\,\hat{a} + \sqrt{\kappa - 1}\,\hat{b}^\dag.  
\end{align}
Importantly, the amplification occurs in a way that is associated with making the system more classical-like \cite{quantum_phase_Q_amplification_Schleich_1992}. For example, it is known that P distribution of an infinitely amplified state is always non-negative \cite{quantum_phase_Q_amplification_Schleich_1992}, i.e. semi-classical.

The action of the \emph{quantum limited attenuator channel} of strength $\lambda\in[0,1]$ on an operator $\hat{X}$ is defined similarly as~
\begin{align} \label{eq:QLAtt}
    \mathcal{E}_\lambda\big(\hat{X}\big)\coloneqq
        \Tr_2\left[\hat{B}_{12}(\lambda)\left(\hat{X}\otimes\ket{0}\bra{0}\right)
        \hat{B}_{12}^\dag(\lambda)\right],
\end{align}
where 
\begin{align}
    \hat{B}_{12}(\lambda)\coloneqq
        \exp\left[\arccos\sqrt{\lambda}(\hat{a}^\dag\hat{b}-\hat{a}\hat{b}^\dag)\right]
\end{align}
is the two-mode beamsplitter operator. Here, the strength of the channel is decreasing with $\lambda$, with the largest value $\lambda=1$ corresponding to the identity channel. Intuitively, the quantum limited attenuator, sometimes called a pure-loss channel \cite{pure-loss_Garcia_2012,pure-loss_Mari_2014}, has an inverse effect to the amplifier: it weakens the system by decreasing the number of particles within:
\begin{align} \label{eq:n_E}
    \braket{\hat{a}^\dag\hat{a}}_{\mathcal{E}_\lambda(\hat{\rho})} 
    = \lambda\braket{\hat{a}^\dag\hat{a}}_{\hat{\rho}} + 1-\lambda.
\end{align}
Again, this easily follows from the definition of the channel and the identity \cite{QLA_De_Palma_2017}
\begin{align} \label{eq:beamsplitting_transformation}
    \hat{B}_{12}^\dag(\lambda) \, \hat{a} \, \hat{B}_{12}(\lambda) 
        = \sqrt{\lambda}\,\hat{a} + \sqrt{1-\lambda}\,\hat{b}^\dag.    
\end{align}


\section{Main results}
\label{sec:results}
We are now in the position to state and prove our main result. An in-depth discussion is provided in the next section.

\begin{proposition} \label{th:main}
Let
\begin{align} \label{eq:C}
    \mathcal{C} \coloneqq \mathcal{E}_{1/2}\circ\mathcal{A}_{2},
\end{align}
    with $\mathcal{A}_\kappa$ and $\mathcal{E}_\lambda$ being the quantum limited amplifier and attenuator, respectively, as defined in the previous section. The following relations between the P, W, Q quasiprobability distributions hold:
\begin{align}
    W_{\hat{X}}(\alpha) \label{eq:P_into_W_C}
        &= P_{\mathcal{C}(\hat{X})}(\alpha),\\
    Q_{\hat{X}}(\alpha) \label{eq:W_into_Q_C}
        &= W_{\mathcal{C}(\hat{X})}(\alpha),\\ 
    Q_{\hat{X}}(\alpha) \label{eq:P_into_Q_C}
        &= P_{\mathcal{C}^2(\hat{X})}(\alpha).
\end{align}
\end{proposition}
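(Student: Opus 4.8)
The plan is to funnel all three identities through the symmetrically ordered characteristic function $\chi_{\hat{X}}(\beta)\coloneqq\Tr[\hat{X}\hat{D}(\beta)]$, of which the Wigner distribution (\ref{eq:W}) is by definition the Fourier transform. Recalling the standard ordering relations, $P_{\hat{X}}$ and $Q_{\hat{X}}$ are the analogous transforms of the normally and anti-normally ordered characteristic functions, i.e. of $\chi_{\hat{X}}(\beta)e^{+|\beta|^2/2}$ and $\chi_{\hat{X}}(\beta)e^{-|\beta|^2/2}$ respectively. Consequently, everything reduces to determining how $\mathcal{C}$ acts on $\chi_{\hat{X}}$, which I would compute first.

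First I would evaluate the action of each elementary channel on $\chi$. Inserting the definition (\ref{eq:QLAmp}) into $\Tr[\mathcal{A}_\kappa(\hat{X})\hat{D}(\beta)]$ and using cyclicity of the trace, the problem reduces to conjugating the displacement operator by the two-mode squeezer, $\hat{S}_{12}^\dag(\kappa)\,\hat{D}(\beta)\,\hat{S}_{12}(\kappa)$. Since the Bogoliubov identity (\ref{eq:squeezing_transformation_original}) is linear in the mode operators, this conjugated displacement factorizes into a system part $\hat{D}_a(\sqrt{\kappa}\,\beta)$ and an ancilla part $\hat{D}_b(-\sqrt{\kappa-1}\,\beta^*)$. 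Averaging the ancilla against its vacuum via $\bra{0}\hat{D}(\gamma)\ket{0}=e^{-|\gamma|^2/2}$ then yields $\chi_{\mathcal{A}_\kappa(\hat{X})}(\beta)=\chi_{\hat{X}}(\sqrt{\kappa}\,\beta)\,e^{-(\kappa-1)|\beta|^2/2}$. The identical manipulation applied to (\ref{eq:QLAtt}) with the beamsplitter identity (\ref{eq:beamsplitting_transformation}) gives $\chi_{\mathcal{E}_\lambda(\hat{X})}(\beta)=\chi_{\hat{X}}(\sqrt{\lambda}\,\beta)\,e^{-(1-\lambda)|\beta|^2/2}$.

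Composing these with $\kappa=2$ and $\lambda=1/2$, the rescaling factors collapse, $\sqrt{2}\cdot\sqrt{1/2}=1$, so no stretching of $\beta$ survives and the two residual Gaussians (each $e^{-|\beta|^2/4}$) merge into the single key identity $\chi_{\mathcal{C}(\hat{X})}(\beta)=\chi_{\hat{X}}(\beta)\,e^{-|\beta|^2/2}$. From here the three claims follow by bookkeeping. For (\ref{eq:W_into_Q_C}), the symmetric characteristic function of $\mathcal{C}(\hat{X})$ is literally $\chi_{\hat{X}}(\beta)e^{-|\beta|^2/2}$, i.e. the anti-normally ordered characteristic function of $\hat{X}$, whose transform is $Q_{\hat{X}}$; equivalently, Fourier transforming $e^{-|\beta|^2/2}$ produces the kernel $2e^{-2|\alpha|^2}$, so $W_{\mathcal{C}(\hat{X})}$ is exactly the Weierstrass smoothing (\ref{eq:W_into_Q_transform}) of $W_{\hat{X}}$, namely $Q_{\hat{X}}$. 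For (\ref{eq:P_into_W_C}), the extra normal-ordering prefactor $e^{+|\beta|^2/2}$ in $P_{\mathcal{C}(\hat{X})}$ cancels $e^{-|\beta|^2/2}$ exactly, leaving $\chi_{\hat{X}}(\beta)$, whose transform is $W_{\hat{X}}$; equivalently one compares (\ref{eq:P_into_W_transform}) applied to $\mathcal{C}(\hat{X})$ with (\ref{eq:W_into_Q_transform}) for $\hat{X}$ and invokes injectivity of Gaussian smoothing. Finally (\ref{eq:P_into_Q_C}) needs no further computation: applying (\ref{eq:P_into_W_C}) to $\mathcal{C}(\hat{X})$ and then (\ref{eq:W_into_Q_C}) gives $P_{\mathcal{C}^2(\hat{X})}=W_{\mathcal{C}(\hat{X})}=Q_{\hat{X}}$.

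I expect the only genuine obstacle to be the characteristic-function computation of the second paragraph---carrying the conjugation of $\hat{D}(\beta)$ through the two-mode Gaussian unitaries and the ancilla vacuum average without sign or conjugation slips---because a single error there would spoil the precise cancellation between the channel's Gaussian noise and the ordering prefactors on which all three identities rest. Once $\chi_{\mathcal{C}(\hat{X})}(\beta)=\chi_{\hat{X}}(\beta)e^{-|\beta|^2/2}$ is secured, the remainder is a routine change of ordering convention.
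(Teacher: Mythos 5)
Your proof is correct, but it takes a genuinely different route from the paper's. The paper works entirely in the coherent-state (phase-space) domain: it reduces everything to Eq.~(\ref{eq:P_into_W_C}), explicitly computes $\mathcal{A}_2(\ket{\alpha}\bra{\alpha})$ as a displaced thermal state, expands that in the P representation, applies the rescaling action of $\mathcal{E}_{1/2}$ to obtain $\mathcal{C}(\ket{\alpha}\bra{\alpha})=2\int\frac{d^2\gamma}{\pi}e^{-2|\gamma-\alpha|^2}\ket{\gamma}\bra{\gamma}$, and then extends to arbitrary $\hat{X}$ by linearity through $P_{\hat{X}}$. You instead work in the Fourier domain, where both channels act on the symmetric characteristic function by an argument rescaling times a Gaussian, so that $\chi_{\mathcal{C}(\hat{X})}(\beta)=\chi_{\hat{X}}(\beta)e^{-|\beta|^2/2}$, and all three identities drop out of the $s$-ordering prefactors; I checked your two channel formulas and the ancilla vacuum averages, and they are right (and your identification of $P$ and $Q$ as transforms of $\chi e^{\pm|\beta|^2/2}$ is consistent with the paper's definitions and $1/\pi$ convention). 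Each approach buys something: the paper's construction yields the explicit physical picture of $\mathcal{C}$ smearing coherent states into Gaussian mixtures, which it reuses later (e.g.\ in the parity-operator computations and Appendix~\ref{app:C_squared_decomposition}); your argument is shorter, makes transparent exactly why $\kappa=2$, $\lambda=1/2$ is forced (the rescalings $\sqrt{\kappa\lambda}=1$ must cancel so only the Gaussian noise survives), proves Eq.~(\ref{eq:W_into_Q_C}) without ever invoking the possibly highly singular $P_{\hat{X}}$ of the input, and generalizes immediately to the full $s$-ordered family. The only point worth making explicit if you write this up is the inversion step at the end -- that equality of ($s$-ordered) characteristic functions, or injectivity of the Fourier/Weierstrass transform on the relevant class of distributions, implies equality of the quasiprobability distributions -- which you currently gesture at with ``invokes injectivity of Gaussian smoothing''; this is standard but deserves one sentence.
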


\begin{proof}
We begin by observing that it is enough to prove only Eq. (\ref{eq:P_into_W_C}). If this equation is true, then by comparison with Eq. (\ref{eq:P_into_W_transform}) we would have
\begin{align} 
    P_{\mathcal{C}(\hat{X})}(\alpha) = 
        2 \int \frac{d^2\beta}{\pi} P_{\hat{X}}(\beta) e^{-2|\alpha-\beta|^2}.
\end{align}
Since $P_{\hat{X}}$ is completely arbitrary, we could substitute $W_{\hat{X}}$ for it, which together with Eq. (\ref{eq:W_into_Q_transform}), would yield Eq. (\ref{eq:W_into_Q_C}). The remaining Eq. (\ref{eq:P_into_Q_C}) follows directly from Eqs. (\ref{eq:P_into_W_C}, \ref{eq:W_into_Q_C}).

To prove Eq. (\ref{eq:P_into_W_C}), it will be convenient to first compute the action of the channel $\mathcal{C}$ on an arbitrary coherent state. We begin by calculating $\mathcal{A}_2(\ket{\alpha}\bra{\alpha})$. By definition (\ref{eq:QLAmp}),
\begin{align}
\begin{split}
    \mathcal{A}_2(\ket{\alpha}\bra{\alpha}) 
        &= \Tr_2 \left[\hat{S}_{12}(2) \ket{\alpha 0}\bra{\alpha 0} \hat{S}_{12}^\dag(2)\right] \\
        &= \Tr_2 \left[\hat{S}_{12}(2) \hat{D}_1(\alpha) \ket{0 0}\bra{0 0} 
            \hat{D}_1^\dag({\alpha}) \hat{S}_{12}^\dag(2)\right],
\end{split}
\end{align}
Here and below, the bottom index for the displacement operator denotes the mode it acts upon. Using the unitary property of squeezing and the fact that \footnote{To see this, observe from Eq. (\ref{eq:squeezing_operator}) that taking the hermitian conjugate of $\hat{S}_{12}(\kappa)$ is equivalent to the change $\arcosh \sqrt{\kappa} \to -\arcosh \sqrt{\kappa}$, which implies $\sqrt{\kappa-1}=\sqrt{\cosh^2(\arcosh \sqrt{\kappa})-1}=\sinh(\arcosh \sqrt{\kappa})\to -\sinh(\arcosh \sqrt{\kappa})=-\sqrt{\kappa-1}$, hence the minus sign in the r.h.s. of Eq. (\ref{eq:squeezing_action_on_annihilation_operator}) in comparison to Eq.
(\ref{eq:squeezing_transformation_original}).}
\begin{align} \label{eq:squeezing_action_on_annihilation_operator}
\begin{split}
    \hat{S}_{12}(\kappa) \, \hat{a} \, \hat{S}_{12}^\dag(\kappa)
        = \sqrt{\kappa}\,\hat{a} - \sqrt{\kappa - 1}\,\hat{b}^\dag,
\end{split}
\end{align}
we get
\begin{align} \label{eq:proof_Z_in_trace}
\begin{split}
    \mathcal{A}_2(\ket{\alpha}\bra{\alpha})
        &= \Tr_2 \left[\hat{Z}_{12}(\alpha) \hat{S}_{12}(2) \ket{00}\bra{00} 
            \hat{S}_{12}^\dag(2) \hat{Z}_{12}^\dag(\alpha)\right],
\end{split}
\end{align}
where
\begin{align} \label{eq:Z}
\begin{split}
    \hat{Z}_{12}(\alpha) &\coloneqq \hat{S}_{12}(2)\hat{D}_{1}(\alpha)\hat{S}_{12}^\dag(2)\\
    &= \exp\left[
        \alpha\left(\sqrt{2}\hat{a}^\dag - \hat{b}\right)
        -\alpha^*\left(\sqrt{2}\hat{a} - \hat{b}^\dag\right)
        \right]\\
    &= \hat{D}_{1}\left(\sqrt{2}\alpha\right)\hat{D}_{2}(\alpha^*).
\end{split}
\end{align}
Substituting this into Eq. (\ref{eq:proof_Z_in_trace}), we make two observations. Firstly, because the trace is only over the second subsystem, the displacement operators for the first subsystem can be taken outside. Secondly, because partial trace is cyclic in the subsystem that we trace out, the displacement operators in the second subsystem can be moved to cancel with each other (since they are unitary). In summary, we obtain
\begin{align}
\begin{split}
    \mathcal{A}_2&(\ket{\alpha}\bra{\alpha})\\
        &= \hat{D}_{1}\left(\sqrt{2}\alpha\right)
            \Tr_2 \left[\hat{S}_{12}(2) \ket{0 0}\bra{0 0} \hat{S}_{12}^\dag(2)\right]
            \hat{D}_{1}^\dag\left(\sqrt{2}\alpha\right).
\end{split}
\end{align}
However, by definition, the partial trace above is nothing but $\mathcal{A}_2(\ket{0}\bra{0})$, which is known \cite{QLA_De_Palma_2017} to be just the thermal state 
\begin{align} \label{eq:A_displacement_semi-commutation}
\begin{split}
    \hat{g}_\beta \coloneqq 
        \frac{e^{-\beta \hat{a}^\dag\hat{a}}}{\Tr e^{-\beta \hat{a}^\dag\hat{a}}}
\end{split}
\end{align}
with inverse temperature $\beta=\ln 2$. This yields
\begin{align} \label{eq:A_commutation_with_displacement}
\begin{split}
    \mathcal{A}_2(\ket{\alpha}\bra{\alpha})
        = \hat{D}\left(\sqrt{2}\alpha\right)\hat{g}_{\ln 2}\hat{D}^\dag\left(\sqrt{2}\alpha\right).
\end{split}
\end{align}

We now move to the phase-space. The P representation of thermal states is \cite{optics_Agarwal_2012}
\begin{align}
\begin{split}
    P_{\hat{g}_{\beta}}(\gamma) = \frac{1}{\bar{n}}e^{-|\gamma|^2/\bar{n}},
\end{split}
\end{align}
where $\bar{n}=1/(e^{\beta}-1)$ is the mean number of photons in the state. For our case, i.e. $\beta=\ln 2$, it is easy to see that $\bar{n}=1$, resulting in
\begin{align}
\begin{split}
    \mathcal{A}_2(\ket{\alpha}\bra{\alpha})
        &= \int \frac{d^2\gamma}{\pi}e^{-|\gamma|^2}
        \hat{D}\left(\sqrt{2}\alpha\right)\ket{\gamma}\bra{\gamma}
        \hat{D}^\dag\left(\sqrt{2}\alpha\right)\\
        &= \int \frac{d^2\gamma}{\pi}e^{-|\gamma|^2}
        \ket{\gamma+\sqrt{2}\alpha}\bra{\gamma+\sqrt{2}\alpha},
\end{split}
\end{align}
where we used the defining property of the displacement operator: displacing coherent states. Changing the integration variable, the above reduces to
\begin{align} \label{eq:A_on_coherent_states}
\begin{split}
    \mathcal{A}_2(\ket{\alpha}\bra{\alpha})
        = \int \frac{d^2\gamma}{\pi}e^{-|\gamma-\sqrt{2}\alpha|^2}
        \ket{\gamma}\bra{\gamma}.
\end{split}
\end{align}
To obtain $\mathcal{C}(\ket{\alpha}\bra{\alpha})$, we need only to apply $\mathcal{E}_{1/2}$ to the above, i.e.
\begin{align}
\begin{split}
    \mathcal{C}(\ket{\alpha}\bra{\alpha})
        = \int \frac{d^2\gamma}{\pi}e^{-|\gamma-\sqrt{2}\alpha|^2}
        \mathcal{E}_{1/2}(\ket{\gamma}\bra{\gamma}).
\end{split}
\end{align}
However, the action of the quantum limited attenuator on coherent states is simple \cite{QLA_De_Palma_2017}:
\begin{align} \label{eq:E_on_coherent_states}
    \mathcal{E}_\lambda \left(\ket{\alpha}\bra{\alpha}\right) 
        = \ket{\sqrt{\lambda}\alpha}\bra{\sqrt{\lambda}\alpha}.
\end{align}
Therefore, we have
\begin{align}
\begin{split}
    \mathcal{C}(\ket{\alpha}\bra{\alpha})
        = \int \frac{d^2\gamma}{\pi}e^{-|\gamma-\sqrt{2}\alpha|^2}
        \ket{\gamma/\sqrt{2}}\bra{\gamma/\sqrt{2}}
\end{split}
\end{align}
and, ultimately, after a change of integration variable,
\begin{align} \label{eq:proof_C_coherent_final}
\begin{split}
    \mathcal{C}(\ket{\alpha}\bra{\alpha})
        = 2 \int \frac{d^2\gamma}{\pi}e^{-2|\gamma-\alpha|^2}
        \ket{\gamma}\bra{\gamma}.
\end{split}
\end{align}
Note that $2e^{-2|\gamma-\alpha|^2}$ is actually the Wigner function of a coherent state, meaning that with the above equation, we have essentially proved Eq. (\ref{eq:P_into_W_C}) for coherent state inputs.

To generalize to arbitrary operators, we once again employ the P representation, obtaining
\begin{align}
\begin{split}
    \mathcal{C}\big(\hat{X}\big)
        & = \int \frac{d^2\alpha}{\pi} P_{\hat{X}}(\alpha)\,
            \mathcal{C}(\ket{\alpha}\bra{\alpha}).          
\end{split}
\end{align}
However, we already know $\mathcal{C}(\ket{\alpha}\bra{\alpha})$ from Eq. (\ref{eq:proof_C_coherent_final}). We thus get
\begin{align}
\begin{split}
    \mathcal{C}\big(\hat{X}\big)
        & = \int \frac{d^2\gamma}{\pi} \ket{\gamma}\bra{\gamma} \times 2
            \int \frac{d^2\alpha}{\pi} P_{\hat{X}}(\alpha) e^{-2|\gamma-\alpha|^2}.       
\end{split}
\end{align}
Comparing the rightmost integral in the above equation with Eq. (\ref{eq:P_into_W_transform}), we can see that
\begin{align}
\begin{split}
    \mathcal{C}\big(\hat{X}\big)
        & = \int \frac{d^2\gamma}{\pi} W_{\hat{X}}(\gamma) \ket{\gamma}\bra{\gamma}.
\end{split}
\end{align}
Clearly, then, by definition (\ref{eq:P}), the P distribution of $\mathcal{C}\big(\hat{X}\big)$ is the W distribution of $\hat{X}$. This is exactly what we wanted to prove.
\end{proof}

\section{Discussion}
\label{sec:discussion}
Let us discuss our results. To start with, our equations (\ref{eq:P_into_W_C}-\ref{eq:P_into_Q_C}) provide a direct operational interpretation for the mathematical relations (\ref{eq:P_into_W_transform}-\ref{eq:P_into_Q_transform}). As we showed, the Gaussian smoothing appearing there corresponds physically to a combination of quantum limited amplification and attenuation. Notably, both channels enter the relations with finite strengths, meaning that, as Gaussian operations, they can be relatively easily performed in a modern quantum optical laboratory \cite{Gaussianity_resource_Albarelli_2018,Gaussianity_resource_Takagi_2018}. This improves upon a previously known relation between the P and Q distributions \cite{quantum_phase_Q_amplification_Schleich_1992}, which requires infinite amplification and thus infinite energy, and is therefore unphysical.

Our findings also strengthen the intuition that the Q distribution can be considered more classical than the W distribution, which in turn is more classical than the P distribution. As mentioned previously, the quantum limited amplifier is already known for making various phenomena more classical: besides the previously mentioned transformation of the P distribution into the Q distribution \cite{quantum_phase_Q_amplification_Schleich_1992}, it also transforms the von Neumann entropy into the semi-classical Wehrl entropy \cite{Wehrl_entropy_De_Palma_2017} and the Pegg-Barnett quantum phase formalism into the Husimi distribution-based Paul formalism \cite{Pegg-Barnett_Paul_relation_Linowski}. Furthermore, both the quantum limited amplifier and the quantum limited attenuator are irreversible channels, meaning that the application of the channel $\mathcal{C}$ to the system results in irreversible coarse-graining, i.e. loss of the full quantum information. 

Notably, unlike for the quantum limited amplifier $\mathcal{A}_2$ or the quantum limited attenuator $\mathcal{E}_{1/2}$, where the coarse-graining comes at the price of a radical change in the number of particles in the system (roughly either doubling or halving it), their combined action through the channel $\mathcal{C}$ leaves this number effectively intact for all but very low particle numbers. By combining the formulas (\ref{eq:n_A}, \ref{eq:n_E}) one can easily calculate that under the action of the channel, the mean particle number of the system transforms as
\begin{align} \label{eq:n_C}
    \braket{\hat{a}^\dag\hat{a}}_{\mathcal{C}(\hat{\rho})} = \braket{\hat{a}^\dag\hat{a}}_{\hat{\rho}} + 1.
\end{align}
The additional term of $1$ can be considered negligible for all systems but those very close to the vacuum state.

The classicality-increasing effect of the channel $\mathcal{C}$ can be seen by considering sets of semi-classical states. Let us define by $\mathcal{S}$ the set of all quantum states associated with the considered Hilbert space and by $\mathcal{S}_{W_+}$, $\mathcal{S}_{P_+}$ the subsets of $\mathcal{S}$ consisting of all the states with non-negative Wigner and Glauber-Sudarshan distributions, respectively. We have that
\begin{align}
    \mathcal{S} \supset \mathcal{S}_{W_+} \supset \mathcal{S}_{P_+},
\end{align}
with the classicality of the subsequent sets increasing from left to right (since the non-negativity of the P distribution is a stronger criterion for classicality than its analog for the W function). Now, consider the image $\Im(\mathcal{C})$ of the channel $\mathcal{C}$ when acting on $\hat{\rho}\in\mathcal{S}$. Because the Q distribution is non-negative for all density operators, it follows directly from our main result that $\Im(\mathcal{C})\subset\mathcal{S}_{W_+}$ and $\Im(\mathcal{C}^2)\subset\mathcal{S}_{P_+}$: subsequent actions of the channel $\mathcal{C}$ take the quantum state into more and more semi-classical sets.

Let us observe that, because the action of the channel $\mathcal{C}^2$ is equivalent to replacing the P distribution of the input by its Q distribution, the channel must necessarily coincide with the projection onto the set of coherent states:
\begin{align} \label{eq:C_squared_projection}
    \mathcal{C}^2\big(\hat{X}\big) = \int \frac{d^2\alpha}{\pi} 
        \ket{\alpha}\bra{\alpha} \hat{X} \ket{\alpha}\bra{\alpha}.
\end{align}
Using the P representation with Eqs. (\ref{eq:E_on_coherent_states}, \ref{eq:A_on_coherent_states}, \ref{eq:P_into_Q_transform}), one can also easily derive yet another representation for the channel (for convenience, we present the full derivation in Appendix \ref{app:C_squared_decomposition}):
\begin{align} \label{eq:C_squared_decomposition}
    \mathcal{C}^2 = \mathcal{A}_{2}\circ\mathcal{E}_{1/2}.
\end{align}
In other words, $\mathcal{C}^2$ has identical components as the channel $\mathcal{C}$, but applied in reverse order. These findings have two worthwhile consequences:
\begin{itemize}
    \item The projection onto the set of coherent states, effectively a projection onto a set of semi-classical pure states, can be experimentally implemented either by a double application of the channel $\mathcal{C}=\mathcal{E}_{1/2}\circ\mathcal{A}_{2}$ or by a single application of the channel $\mathcal{C}^2=\mathcal{A}_{2}\circ\mathcal{E}_{1/2}$;
    \item The channel $\mathcal{C}$ can be regarded as a square root of such a projection, i.e. a square root of a projection onto the set of semi-classical pure states.
\end{itemize}
We remark that channels of the form 
\begin{align}
    \mathcal{N}_E = \mathcal{A}_{E+1}\circ\mathcal{E}_{1/(E+1)},
\end{align}
which realize $\mathcal{C}^2$ for $E=1$, are known in the literature under the name of additive-noise channels \cite{pure-loss_De_Palma_2017}.

To demonstrate that the action of the channel $\mathcal{C}$ indeed makes its input more classical, let us consider the displaced parity operator \cite{displaced_parity_Birrittella_2021,displaced_parity_Rundle_2021,displaced_parity_Bishop_1994}:
\begin{align} \label{eq:parity_operator}
    \hat{\Pi}(\alpha)\coloneqq 2 \hat{D}({\alpha})(-1)^{\hat{a}^\dag\hat{a}}\hat{D}^\dag(\alpha),
\end{align}
used, e.g. in spectroscopy. Here, the multiplicative factor of 2 was added to normalize the operator to one \footnote{A careful reader may find this statement rather dangerous: after all, due to the cyclic property of trace, $\Tr\hat{\Pi}(\alpha)=2\Tr(-1)^{\hat{a}^\dag\hat{a}}$. If we compute this in the Fock basis, we obtain $\Tr(-1)^{\hat{a}^\dag\hat{a}} = 1-1+1-1+\ldots$, which is indefinite. However, this is simply a sign that we chose a wrong approach to the problem. For example, if we use the coherent state basis, we quickly find $\Tr(-1)^{\hat{a}^\dag\hat{a}}=1/2$. The same conclusion can be reached by other means, see, e.g. \cite{displaced_parity_Bishop_1994}.}. As we explicitly calculate in Appendix \ref{app:parity_C}, despite not even being a valid quantum state (due to its negative eigenvalues), the displaced parity operator becomes semi-classical after subjecting it once and twice to the channel~$\mathcal{C}$. After a single application, we obtain the coherent state:
\begin{align} \label{eq:parity_C}
    \mathcal{C}\big[\hat{\Pi}(\alpha)\big] &= \ket{\alpha}\bra{\alpha},
\end{align}
which is of course regarded as one of the most classical states in quantum mechanics. However, it can be argued that the state is still partially quantum due to its ideal purity -- its von Neumann entropy $S_V(\hat{\rho})\coloneqq -\Tr(\hat{\rho}\ln\hat{\rho})$ vanishes, indicating zero uncertainty about the quantum system it describes. This stops being the case after a second application of the channel:
\begin{align} \label{eq:parity_C_squared}
    \mathcal{C}^2\big[\hat{\Pi}(\alpha)\big] &= 2\int \frac{d^2\gamma}{\pi}
        e^{-2|\alpha-\gamma|^2}\ket{\gamma}\bra{\gamma},
\end{align}
where we obtain a classical Gaussian mixture of (already semi-classical) coherent states.

The fact that coherent states can be written as the action of $\mathcal{C}$ on the parity operator has some further implications. Namely, using the P representation, we have that, for any operator
\begin{align}
\begin{split}
    \hat{X} = \mathcal{C}
        \left[\int \frac{d^2\alpha}{\pi} P_{\hat{X}}(\alpha)\hat{\Pi}(\alpha)\right].
\end{split}
\end{align}
But if this is true, then we can identify the following formal expression for the channel inverse:
\begin{align} \label{eq:C_inverse}
\begin{split}
    \mathcal{C}^{-1}\big(\hat{X}\big) = 
        \int \frac{d^2\alpha}{\pi} P_{\hat{X}}(\alpha)\hat{\Pi}(\alpha).
\end{split}
\end{align}
We stress that this formula should be treated with care: although the operator above indeed behaves like the channel inverse (e.g. applying $\mathcal{C}$ to the above gives $\hat{X}$), it may not be unique. Furthermore, and perhaps more importantly, the above formal inverse may in practice not exist for certain inputs, in the same way that, although formally the P distribution may be obtained from the Q distribution by inverting the Weierstrass transform (\ref{eq:P_into_Q_transform}), in practice, the result is often divergent.

The example of the parity operator highlights another subtlety of the ``classicalization'' procedure through the channel $\mathcal{C}$. According to Proposition \ref{th:main}, if we apply the channel $\mathcal{C}$ (or $\mathcal{C}^2$) to any valid density operator, regardless of how much ``quantumness'' it contains, we will end up with another valid density operator: a completely semi-classical one in the sense of having a non-negative W (or P) distribution. However, the converse is in general not true: if we start with a semi-classical state, it will in general not be possible to write it as the action of $\mathcal{C}$ (or $\mathcal{C}^2$) on some other valid density operator. An explicit example of this is given by the state (\ref{eq:parity_C_squared}): although it is semi-classical, it originates through the channel $\mathcal{C}^2$ by the displaced parity operator (\ref{eq:parity_operator}), which does not describe a quantum state.

This property allows us to define new formal criteria for non-classicality based on negativity of the Wigner and Glauber-Sudarshan distributions. Using the channel inverse (\ref{eq:C_inverse}) in Eq. (\ref{eq:W_into_Q_C}), we immediately find that for any state~$\hat{\rho}$
\begin{align}
    W_{\hat{\rho}}(\alpha) = Q_{\mathcal{C}^{-1}(\hat{\rho})}(\alpha).
\end{align}
Since $\mathcal{C}^{-1}(\hat{\rho})$ does not have to be a valid density operator, its Husimi distribution may have negative values, corresponding to a non-classical Wigner function of $\hat{\rho}$. This implies the following conditions:
\begin{enumerate}
    \item A sufficient condition for classicality of the state $\hat{\rho}$ with respect to Wigner function negativity is that $\mathcal{C}^{-1}(\hat{\rho})$ is positive semidefinite, i.e. it is a valid quantum state.
    \item A necessary condition for non-classicality of the state $\hat{\rho}$ with respect to Wigner function negativity is that $\mathcal{C}^{-1}(\hat{\rho})$ is not positive semidefinite.    
\end{enumerate}
Note that even if $\mathcal{C}^{-1}(\hat{\rho})$ is not positive semidefinite, it can still have a non-negative Husimi distribution, which is why the second criterion given above is only necessary for non-classicality.

Similar conditions can be constructed for the Glauber-Sudarshan distribution. Due to Eq. (\ref{eq:P_into_Q_C}), we have
\begin{align}
    P_{\hat{\rho}}(\alpha) = Q_{\mathcal{C}^{-2}(\hat{\rho})}(\alpha),
\end{align}
meaning that:
\begin{enumerate}
    \item A sufficient condition for classicality of the state $\hat{\rho}$ with respect to Glauber-Sudarshan distribution negativity is that $\mathcal{C}^{-2}(\hat{\rho})$ is positive semidefinite.
    \item A necessary condition for non-classicality of the state $\hat{\rho}$ with respect to Glauber-Sudarshan distribution negativity is that $\mathcal{C}^{-2}(\hat{\rho})$ is not positive semidefinite.    
\end{enumerate}

The above criteria could even give rise to non-classicality measures based on the degree to which $\mathcal{C}^{-1}(\rho)$ and $\mathcal{C}^{-2}(\rho)$ fail to be positive semidefinite, similarly to how it is sometimes possible to measure state entanglement based on the degree to which its partial transpose is negative \cite{PPT,PPT_cv_systems,two-mode_gaussian_etc_proper_norm}. Alternatively, one could consider the distance (such as the trace distance or relative entropy) of state $\hat{\rho}$ from the set of states of the form $\mathcal{C}(\hat{\sigma})$ [or $\mathcal{C}^2(\hat{\sigma})$], with $\hat{\sigma}$ being valid quantum states. 
As open problems for future research we leave both: in-depth formal analysis of the above idea within the framework of resource theories, as well as its practical applications for selected families of states (e.g. with high symmetry).

\section{Outlooks}
\label{sec:outlooks}
We derived explicit operational relations between the three most well-known quantum quasiprobability distributions: the Glauber-Sudarshan distribution and the Wigner and Husimi functions. Notably, these relations, summarized in Proposition \ref{th:main}, rely fully on the channel $\mathcal{C}$, a single composition of finite-strength quantum limited amplifier and quantum limited attenuator, both readily available in a modern quantum optical laboratory. Our results shed light on the operational understanding of the quantum-to-classical transition.

Our findings may provide a basis for further research. Most notably, as discussed in the last section, the channel $\mathcal{C}$ can be used to define new formal criteria for and measures of non-classicality of quantum states. It would be interesting to see whether these criteria and measures could be employed in practical calculations or if they can be related to other known objects of this type. Furthermore, one could verify experimentally what happens to quantum systems when subjected to the channels $\mathcal{C}$ and $\mathcal{C}^2$, with special emphasis on systems with strong quantum features, such as a high degree of entanglement or purity.

\begin{acknowledgements}
We acknowledge support by the Foundation for Polish Science (International Research Agenda Programme project, International Centre for Theory of Quantum Technologies, Grant No. 2018/MAB/5, cofinanced by the European Union within the Smart Growth Operational Program).
\end{acknowledgements}

\bibliography{report}
\bibliographystyle{obib}

\appendix
\section{Proof of Eq. (\ref{eq:C_squared_decomposition})} 
\label{app:C_squared_decomposition}
\setcounter{equation}{0}
\renewcommand{\theequation}{\ref{app:C_squared_decomposition}\arabic{equation}}
In this appendix, we prove Eq. (\ref{eq:C_squared_decomposition}). Using the P representation and Eq. (\ref{eq:E_on_coherent_states}), we immediately get
\begin{align}
    \mathcal{A}_{2}\circ\mathcal{E}_{1/2}\big(\hat{X}\big) =
        \int \frac{d^2\alpha}{\pi} P_{\hat{X}}(\alpha)
        \mathcal{A}_{2}\left(\ket{\alpha/\sqrt{2}}\bra{\alpha/\sqrt{2}}\right).
\end{align}
Employing (\ref{eq:A_on_coherent_states}) and rearranging, this becomes
\begin{align}
    \mathcal{A}_{2}\circ\mathcal{E}_{1/2}\big(\hat{X}\big) =
        \int \frac{d^2\gamma}{\pi}\ket{\gamma}\bra{\gamma}
        \times \int \frac{d^2\alpha}{\pi} P_{\hat{X}}(\alpha) e^{-|\gamma-\alpha|^2}.
\end{align}
Comparing with Eq. (\ref{eq:P_into_Q_transform}), we immediately see that the second integral is just the Q distribution of $\hat{X}$. Since the latter is arbitrary, the channel $\mathcal{A}_{2}\circ\mathcal{E}_{1/2}$ must coincide with the r.h.s. of Eq. (\ref{eq:C_squared_projection}), concluding the proof.

\section{Proof of Eqs. (\ref{eq:parity_C}, \ref{eq:parity_C_squared})} 
\label{app:parity_C}
\setcounter{equation}{0}
\renewcommand{\theequation}{\ref{app:parity_C}\arabic{equation}}
In this appendix, we calculate the action of the channels $\mathcal{C}$, $\mathcal{C}^2$ on the displaced parity operator (\ref{eq:parity_operator}). First, let rewrite Eq. (\ref{eq:A_commutation_with_displacement}) as [we remind the reader that $\hat{g}_{\ln 2}=\mathcal{A}_2\left(\ket{0}\bra{0}\right)$]
\begin{align} \label{eq:A_commutation_with_displacement_rewritten}
\begin{split}
    \mathcal{A}_2\left[\hat{D}(\alpha)\ket{0}\bra{0}\hat{D}^\dag(\alpha)\right]
    = \hat{D}\left(\sqrt{2}\alpha\right)
        \mathcal{A}_2\left(\ket{0}\bra{0}\right)
        \hat{D}^\dag\left(\sqrt{2}\alpha\right).
\end{split}
\end{align}
As we can see, at least for the initial vacuum state, the displacement channel and the quantum limited amplifier nearly commute, i.e. they commute if we rescale the displacement factor by $\sqrt{2}$. However, it is not difficult to convince oneself that our derivation of Eq. (\ref{eq:A_commutation_with_displacement}), and hence (\ref{eq:A_commutation_with_displacement_rewritten}), made no use of the fact that the initial state was the vacuum state. This means that for arbitrary $\hat{X}$, the following holds:
\begin{align} 
\begin{split}
    \mathcal{A}_2\left[\hat{D}(\alpha)\hat{X}\hat{D}^\dag(\alpha)\right]
    = \hat{D}\left(\sqrt{2}\alpha\right)
        \mathcal{A}_2\big(\hat{X}\big)
        \hat{D}^\dag\left(\sqrt{2}\alpha\right).
\end{split}
\end{align}
In particular, for the displaced parity operator,
\begin{align}
\begin{split}
    \mathcal{A}_2\big[\hat{\Pi}(\alpha)\big]
    = 2\hat{D}\left(\sqrt{2}\alpha\right)
        \mathcal{A}_2\left[(-1)^{\hat{a}^\dag\hat{a}}\right]
        \hat{D}^\dag\left(\sqrt{2}\alpha\right).
\end{split}
\end{align}
The action of the quantum limited amplifier can be calculated from the explicit formula in the number basis \cite{Pegg-Barnett_Paul_relation_Linowski}
\begin{align} \label{eq:QLA_standard_basis}
\begin{split}
    \mathcal{A}_{\kappa}\big(\hat{X}\big)=&\:\frac{1}{\kappa}\sum_{\substack{j=0}}^\infty
        \left(\frac{\kappa-1}{\kappa}\right)^{j}
        \sum_{\substack{m,n=0}}^\infty
        \frac{X_{mn}}{\sqrt{\kappa}^{m+n}}\\
    &\sqrt{\binom{j+m}{j}\binom{j+n}{j}}\ket{j+m}\bra{j+n},
\end{split}
\end{align}
which yields
\begin{align}
\begin{split}
    \mathcal{A}_2\left[(-1)^{\hat{a}^\dag\hat{a}}\right] = \frac{1}{2}\ket{0}\bra{0}
\end{split}
\end{align}
and hence
\begin{align}
\begin{split}
    \mathcal{A}_2\big[\hat{\Pi}(\alpha)\big]
        = \hat{D}\left(\sqrt{2}\alpha\right)\ket{0}\bra{0}\hat{D}^\dag\left(\sqrt{2}\alpha\right) 
        = \ket{\sqrt{2}\alpha}\bra{\sqrt{2}\alpha}.
\end{split}
\end{align}
Applying $\mathcal{E}_{1/2}$ to both sides and using (\ref{eq:E_on_coherent_states}), we obtain Eq. (\ref{eq:parity_C}).

It remains to prove Eq. (\ref{eq:parity_C_squared}). However, according to what we have just shown, $\mathcal{C}^2\big[\hat{\Pi}(\alpha)\big]=\mathcal{C}(\ket{\alpha}\bra{\alpha})$, which we have already calculated in Eq. (\ref{eq:proof_C_coherent_final}). As we can easily see, the latter indeed coincides with Eq. (\ref{eq:parity_C_squared}).

\end{document}